\documentclass[journal]{IEEEtran}
\makeatletter
\def\ps@headings{
\def\@oddhead{\mbox{}\scriptsize\rightmark \hfil \thepage}%
\def\@evenhead{\scriptsize\thepage \hfil \leftmark\mbox{}}%
\def\@oddfoot{}%
\def\@evenfoot{}}
\makeatother
\pagestyle{headings}
\usepackage{etoolbox}

\usepackage{amsmath, amssymb, bm, cite, epsfig, psfrag}
\usepackage{epstopdf}
\usepackage{graphicx}
\usepackage{algorithm,algpseudocode}
\usepackage[margin=10pt,font=small,labelfont=bf]{caption}
\usepackage{subcaption}
\usepackage{enumerate}
\usepackage{bbold}
\usepackage{tabularx}
\usepackage{multirow}

\newtoggle{conference}
\togglefalse{conference} 

\interdisplaylinepenalty=2500

\graphicspath{{figures/}}

\def\beq{\begin{equation}}
\def\eeq{\end{equation}}
\def\beqa{\begin{eqnarray}}
\def\eeqa{\end{eqnarray}}
\def\beqan{\begin{eqnarray*}}
\def\eeqan{\end{eqnarray*}}

\def\R{{\mathbb{R}}}

\def\argmax{\mathop{\mathrm{arg\,max}}}

\newtheorem{lemma}{Lemma}

\setlength{\unitlength}{1mm}

\def\rhat{\widehat{r}}
\def\rbar{\overline{r}}

\def\wbar{\overline{w}}
\def\xhat{\widehat{x}}
\def\xbar{\overline{x}}
\def\zhat{\widehat{z}}
\def\zbar{\overline{z}}

\def\ra{\rightarrow}

\def\tm1{t\! - \! 1}
\def\tp1{t\! + \! 1}

\def\bbf{\mathbf{b}}

\def\rbf{\mathbf{r}}
\def\rbfhat{\widehat{\mathbf{r}}}
\def\rbfbar{\overline{\mathbf{r}}}

\def\xbf{\mathbf{x}}
\def\xbfhat{\widehat{\mathbf{x}}}
\def\xbfbar{\overline{\mathbf{x}}}

\def\zbf{\mathbf{z}}
\def\zbfbar{\overline{\mathbf{z}}}
\def\zbfhat{\widehat{\mathbf{z}}}
\def\Abf{\mathbf{A}}

\def\Ebf{\mathbf{E}}
\def\Gbf{\mathbf{G}}

\def\Lbar{\overline{L}}

\def\mubf{{\boldsymbol \mu}}

\def\thetabf{{\bm{\theta}}}
\def\thetabfhat{{\widehat{\bm{\theta}}}}
\def\thetabfbar{{\overline{\bm{\theta}}}}

\def\NMS{N_{\rm MS}}
\def\NBS{N_{\rm BS}}

\begin{document}
\bibliographystyle{IEEEtran}

\title{Joint Interference and User Association Optimization
in Cellular Wireless Networks}
\iftoggle{conference}
{
    \author{
        \IEEEauthorblockN{Changkyu Kim, Russell Ford, Yanjia Qi,
        Sundeep Rangan\\}
        \IEEEauthorblockA{Polytechnic Institute of New York University,
        Brooklyn, New York\\
        Email: \{ckim02,rford02\}@students.poly.edu, \{yqi,srangan\}@poly.edu
        }
    }
}{
    \author{
        Changkyu Kim,~\IEEEmembership{Student~Member,~IEEE},
        Russell Ford,~\IEEEmembership{Student~Member,~IEEE},
        Yanjia Qi,~\IEEEmembership{Student~Member,~IEEE},
        Sundeep Rangan,~\IEEEmembership{Senior~Member,~IEEE}
        \thanks{This material is based upon work supported by the National Science
        Foundation under Grants No. 1116589 and 1237821.}
        \thanks{C. Kim (email:ckim02@students.poly.edu),
                R. Ford (email:rford02@students.poly.edu),
                Y. Qi (email: yqi@poly.edu)
                S. Ranagn (email: srangan@poly.edu) are with the
                NYU Wireless Center,
                Polytechnic Institute of New York University, Brooklyn, NY.}
    }
}

\maketitle

\begin{abstract}
In cellular wireless networks, user association refers to the problem of
assigning mobile users to base station cells --
a critical, but challenging, problem in many
emerging small cell and heterogeneous networks.
This paper considers a general class of utility maximization
problems for joint optimization of mobile
user associations and bandwidth and power allocations.
The formulation can incorporate a large class of network topologies,
interference models, SNR-to-rate mappings and network constraints.
In addition, the model can applied in carrier aggregation
scenarios where mobiles can be served by multiple cells simultaneously.
While the problem is non-convex, our main contribution shows that the optimization
admits a separable dual decomposition.  This property
enables fast computation of upper bounds on the utility
as well as an efficient, distributed implementation for approximate
local optimization via augmented Lagrangian techniques.
Simulations are presented in heterogeneous networks with mixtures of macro
and picocells.  We demonstrate significant value of the proposed methods
in scenarios with variable backhaul capacity in the femtocell links and in cases
where the user density is sufficiently low that lightly-used cells can reduce power.
\end{abstract}

\iftoggle{conference}{}{
    \begin{IEEEkeywords}
    cellular wireless systems, user association, dual decomposition, inter-cell interference coordination, carrier aggregation, 3GPP LTE.
    \end{IEEEkeywords}
}

\section{Introduction}

In cellular wireless networks, \emph{user association} refers to the problem
of assigning each mobile terminal a serving base station cell.
In traditional cellular systems, user association is relatively
straightforward in that mobiles can simply connect to the cell
with the strongest received signal strength.  This simple selection policy maximizes
the SNR to each user and hence the rate per unit bandwidth.

However, in many emerging cellular network deployment models, user association
requires consideration of other factors
in addition to signal strength.  For example, to scale capacity
in a cost effective manner, traditional macrocellular
networks are being increasingly supplemented with overlays of
smaller micro- and picocells~
\iftoggle{conference}{\cite{QCOM-HetNetSurvey:11}}{
\cite{QualcommHetNet:11,QCOM-HetNetSurvey:11}}.
The resulting \emph{heterogeneous networks} (HetNets) will
consist of cells with vastly different sizes and require load balancing to
encourage mobiles to connect to smaller cells that would otherwise be
under-utilized~\cite{Ismail:11,YeRon:12}.
In addition, these HetNets may also contain
open \emph{femtocells}~\cite{ChaAndG:08,AndrewsCDRC:12}
where the wired connectivity would be provided by third parties
other than the cellular provider.
Third party backhaul capacity may be highly variable
and may also need to be considered in the user association decisions.
Finally, the state-of-the-art cellular standards
offer a number of advanced intercellular interference coordination
(ICIC) mechanisms
including subband scheduling and beamforming~\cite{3GPPICIC}.  As a result, the throughput
a user can experience in any one cell can depend in a complex manner on the particular
choice of user assignments and resource allocation decisions in other cells.

To understand these issues,
this paper considers a general class of problems for joint optimization of
user association and bandwidth and power
allocations for interference coordination in cellular wireless networks.
The joint user association-interference problem is formulated in the classical
framework of utility maximization~\cite{KelleyMT:98,ShakkottaiS:07}.
Specifically, the throughput to the users is regarded as a function of the cells
that they are assigned to as well as the resources (power or bandwidth) allocated
to the users in the cells.  The bandwidth and power allocations in one cell
may affect the interference levels in other cells.
The problem is to match the users to cells and allocate resources within each cell
to control the inter-cellular interference
and maximize some system-wide utility function of the rates.

Two variants of the problem are considered:  (a) \textbf{Multiflow optimization}
where mobiles can be served by multiple cells simultaneously as in the case
of 3GPP LTE with carrier aggregation (CA)~\cite{3GPP36.300,YuanZWY:10}; and
(b) \textbf{user association optimization} where each mobile can only be served
by one base station cell at a time as in the case of standard 3GPP LTE without
advanced CA capability.

Both problems are, in general, non-convex.  However, while we cannot find an
optimal solution, our main contribution shows that
under very general formulations, the multiflow
optimization admits a separable dual decomposition.
This separability property
has several key implications:  First, for any given values of the Lagrange parameters,
the dual problem can be solved easily and hence one can minimize a dual upper bound exactly.
Therefore, we can obtain a computable upper bound against which we can
compare the performance of any practical, suboptimal algorithm.
Secondly, although the problem may have a non-zero duality gap (i.e.\ the problem may not be
strongly dual), the separability of the dual problem enables efficient
implementation of various augmented Lagrangian techniques for constrained maximization.
Finally, dual techniques have the benefits that they readily lead to
distributed implementations where the base stations broadcast the dual parameters
and the mobiles update the cell selections.  In the multiflow
problem, the Lagrange parameters have a natural interpretation as access prices.

For user association, we recognize that the problem
is essentially classic route selection in networking theory. Route selection
is known in general to be NP-hard, but we follow a standard procedure where we
use the solution to the multiflow problem as an upper bound and then
truncate the solution for the single path constraint~\cite{FordFull:58,GomHu:61,PioroMedhi:04}.
In this way, we can apply dual decomposition plus one additional truncation step
for the user association problem.
This mutliflow approximation has
also been used in cellular user association in~\cite{YeRon:12} (see below).

An appealing feature of our approach is that the dual separability
applies under very general circumstances.
Specifically, our formulation incorporates
a general linear mixing interference model,
first presented in \cite{RanganM:12} for ICIC problems in femtocells.
Although we will only study bandwidth and power allocations in this paper, the model
can incorporate -- with some extensions -- other ICIC techniques including
subband allocations and beamforming.
In addition, our only restriction on the utility functions is that they are separable.
The dual decomposition will hold even under non-concave utilities.
Finally, unlike previous formulations, the approach here can incorporate
network constraints (provided they are known) including limits in the backhaul
-- an important feature for third-party offload.

\subsection{Related Work}

Early work in the area of optimal user association includes
the classic paper by Hanly~\cite{Hanly:95} that found
a simple iterative algorithm for jointly optimal user association and
uplink power control in CDMA systems.
\iftoggle{conference}{}{
Downlink versions of this problem were later considered in~\cite{Smolyar:09}.
Although we will focus on
bandwidth allocation for OFDMA systems in this paper, our methods apply to
general linear mixing interference models, and thus power control may also be in
considered in the future.

}The recent interest in user association is however not for power control, but rather
optimized load balancing in heterogeneous networks as described
in \cite{Madan:10,Ismail:11,YeRon:12}.
Solutions in 3GPP standards bodies have focussed on simple ``range expansion"
techniques that apply a fixed
bias in the cell selection procedure to shift mobiles to the
smaller cells~\cite{3GPPR1-083813:08,3GPPR1-094225:09}.
These techniques can be supplemented with adaptive bias based on SINR~\cite{Ismail:11}
and iterative transmit power selection~\cite{Madan:10}.
The cell-site selection problem is addressed as utility-based optimization problems with backhaul capacity constraints in~\cite{Galeana:09},
while \cite{Kim:10} proposes a distributed algorithm based on
an optimal load of base stations.  Although our evaluation methodology is based on
standard simulation models such as~\cite{3GPP36.814},
stochastic analyses have been provided for these range expansion techniques
in \cite{JoSXA:11}.
The methodology in this paper follows most closely to the work~\cite{YeRon:12}
that used a similar optimization technique with a multiflow upper bound and
dual decomposition. \iftoggle{conference}{}{The work~\cite{YeRon:12} however does not consider
joint optimization of user association with power and bandwidth control.}

One of the interesting findings in \cite{Madan:10,Ismail:11,YeRon:12}
was that static range expansion works remarkably well and more sophisticated optimization approaches offer little benefit.
However, here we will see that, when combined with
joint bandwidth / power optimization, or in the
presence of network constraints, optimized
user association can offer larger gains in some scenarios.
For example, as our simulations will indicate,
when the network loading per cell is light or when the backhaul capacity is limited,
unused cells can reduce power improving the interference conditions in other cells.

To perform the joint optimization of user association and interference coordination,
our work synthesizes several well-known network optimization methods.
The use of multipath upper bounds for optimized
route selection is a classic technique in networking
\iftoggle{conference}{\cite{PioroMedhi:04}}{
theory that dates at least back to~\cite{FordFull:58,GomHu:61} -- see the
text~\cite{PioroMedhi:04} for a historical survey as well as some
testable conditions under which the multipath
upper bound is provable optimal.  See also~\cite{Jef:1995,KelleyMT:98,Kar:2001,Lin-Shroff:2006}.}
Joint power and congestion problems is given in the well-known paper~\cite{chiang:05}.
\iftoggle{conference}{}{
See also \cite{HandeRCW:08} and the monograph~\cite{ChiangHLT:08}.
This works also uses dual decomposition techniques, but applies to a potentially
more general linear mixing interference model that includes power control as a
special.  However, while more general, our methodology has no guarantee of
global optimality.

Utility optimization approaches to the multipath
problems have been extensively studied including both primal and dual solutions
with close connections to rate control~\cite{KelleyMT:98} and
network pricing~\cite{Jef:1995}.  The dual decomposition methods used in this paper
are closest to~\cite{Kar:2001,Lin-Shroff:2006} which explore
augmented Lagrangian and method of multiplier techniques.}

\section{General Flow and Interference Control Problem} \label{sec:genModel}

Before describing the user association and multiflow problems in
cellular networks, we first
describe a general joint flow control and interference optimization problem.
We will apply this general formulation to cellular networks in the next section.

For the general problem, we
consider a system with $S$ flows with rates $r_s$, $s=1,\ldots,S$
where a subset of the flows are constrained by interfering wireless links.
Index the wireless links by $\ell = 1,\ldots,L$ and
let $\Gamma(\ell)$ denote the set of flows that share the wireless link $\ell$.
We assume that wireless link capacity places a constraint on the rates
through the link of the form
\beq \label{eq:Cxzell}
    \sum_{s \in \Gamma(\ell)} r_s \leq C_\ell(x_\ell,z_\ell), \quad
    \ell = 1,\ldots,L,
\eeq
where $x_\ell$ denotes a resource allocation variable for the link,
$z_\ell$ is the interference level on that link and $C_\ell(x_\ell, z_\ell)$ is the
wireless link capacity.
Following~\cite{RanganM:12}, our key assumption will be that
the interference level on the links
are a linear function of the resource allocation variables on the other links so that
we can write
\beq \label{eq:zGx}
    \zbf = \Gbf\xbf,
\eeq
for some gain matrix $\Gbf$.
In addition, we assume that there are some linear constraints on the
resource allocation variables of the form
\iftoggle{conference}{$\Abf_x\xbf \leq \bbf_x$}{
\beq \label{eq:Axb}
    \Abf_x\xbf \leq \bbf_x,
\eeq
} for some matrix $\Abf_x$ and vector $\bbf_x$.
In the sequel, it will be convenient to rewrite the $L$
constraints \eqref{eq:Cxzell} in a single matrix form
\beq\label{eq:Cxz}
  \Ebf\rbf \leq C(\xbf,\zbf),
\eeq
where $\Ebf$ is the incidence matrix
\iftoggle{conference}{and $C(\xbf,\zbf)$ is the vector of capacities
$C_\ell(x_\ell,z_\ell)$
}{with components
\[
    E_{\ell s} = \left\{ \begin{array}{cc}
        1 & \mbox{if  } s \in \Gamma(\ell), \\
        0 & \mbox{if  } s \not \in \Gamma(\ell),
        \end{array} \right.
\]
and $C(\xbf,\zbf)$ is the capacity vector
\[
    C(\xbf,\zbf) = \left(C_1(x_1,z_1), \cdots, C_L(x_L,z_L)\right)^T.
\]
}

In the user selection problem below, $x_\ell$ will be a quantity proportional
to the power allocated to a wireless link $\ell$.  Hence the interference powers
$z_\ell$ are naturally described as a linear function of the form
\eqref{eq:zGx} where the matrix $\Gbf$ would depend on the path losses
between the transmitters and receivers.  The link capacity \eqref{eq:Cxzell}
could then be a function of the signal-to-noise ratio $x_\ell/z_\ell$.
As described in~\cite{RanganM:12}, the linear mixing interference model \eqref{eq:zGx}
can also be used for studying more complex interference scenarios including
links with subbands or beamforming.

As discussed in the Introduction, we also wish to incorporate rate constraints in
the wired network.
To this end, we assume the network constraints are of the form \cite{PioroMedhi:04}
\beq \label{eq:Arb}
    \Abf_r\rbf \leq \bbf_r,
\eeq
for some matrix $\Abf_r$ and vector $\bbf_r$.

Our goal is to select a resource allocation vector $\xbf$ and flow rate vector
$\rbf$ to maximize some utility function of the form
\beq
    U(\rbf) := \sum_{s=1}^S U_s(r_s),
\eeq
where $U_s(r_s)$ is the utility of the rate on flow $s$.
\iftoggle{conference}{}{
Our formulation will not require that the utility functions are concave,
only that they are continuous and monotonically increasing in $\rbf$.

\begin{figure}
  \centering
  \includegraphics[width=0.25\textwidth]{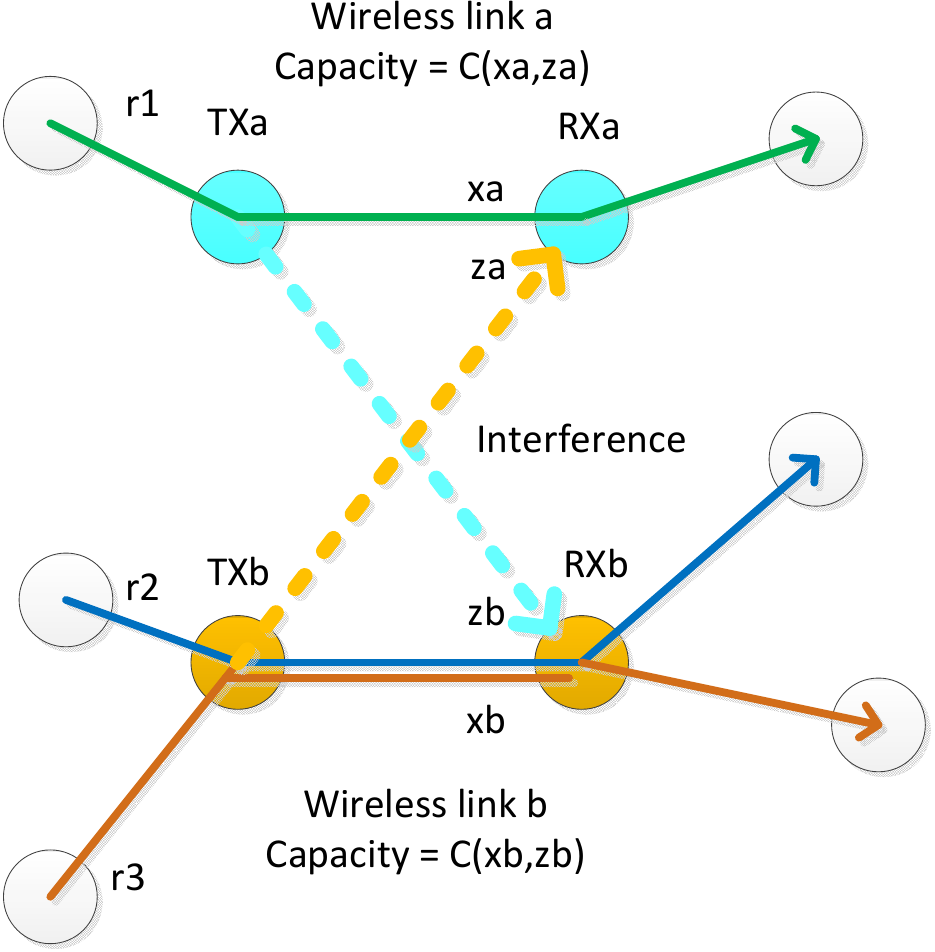}
  \caption{\label{fig:multPathEx} Simple example of a network with
   three flows and two interfering wireless links for joint power
   control and rate optimization. }
\end{figure}

To illustrate the formulation,
Fig.~\ref{fig:multPathEx} shows a simple example for joint power control
and rate optimization.  The network has
$S=3$ flows with rates $r_1$, $r_2$ and $r_3$.  Two of the links,
labeled links $a$ and $b$, are wireless links that interfere with one another.
The transmit signal powers on the links are denoted $x_a$ and $x_b$, while $z_a$
and $z_b$ represent the interference powers.  In this case,
\[
    z_a = G_{ab}x_b, \quad z_b = G_{ba}x_a,
\]
where $G_{ab}$ and $G_{ba}$ are the cross gains between the transmitters
and receivers.  The SINRs on the two links would be given by
\[
    \gamma_a = \frac{G_{aa}x_a}{z_a + n_a}, \quad
    \gamma_b = \frac{G_{bb}x_b}{z_b + n_b},
\]
where $n_a$ and $n_b$ are the noise powers and $G_{aa}$ and $G_{bb}$
are the gains on the intended links.
If the link capacities $C_a$ and $C_b$
can be modeled as functions of the SINRs $\gamma_a$ and $\gamma_b$,
then we can write
\[
    r_1 \leq C_a = C_a(x_a,z_a), \quad r_2+r_3 \leq C_b = C_b(x_b,z_b),
\]
which fits in the format of \eqref{eq:Cxzell}.
Under this model,
one can then attempt to jointly attempt to optimize the transmit powers $x_a$ and $x_b$
along with the flow rates $r_1$, $r_2$ and $r_3$ to maximize some utility.
}

To summarize the optimization, let $\thetabf$ denote the vector of all
the decisions variables
\beq \label{eq:theta}
    \thetabf = (\rbf,\xbf,\zbf).
\eeq
\iftoggle{conference}{}{
and let $\thetabfbar$ denote the maximum values
\beq \label{eq:thetabar}
    \thetabfbar = (\rbfbar,\xbfbar,\zbfbar).
\eeq
}
With some abuse of notation, we let $U(\thetabf) = U(\rbf)$, so that we can
rewrite the optimization as
\beq \label{eq:opt}
    \max_{0 \leq \thetabf \leq \thetabfbar} U(\thetabf) \quad \mbox{s.t. }
        g(\thetabf) \leq 0,
\eeq
where $g(\thetabf)$ is the vector of constraints
\beq \label{eq:gtheta}
    g(\thetabf) := \left[ \begin{array}{c}
        \Abf_x\xbf-\bbf_x \\
        \Abf_r\rbf-\bbf_r \\
        \Gbf\xbf - \zbf \\
        \Ebf \rbf - C(\xbf,\zbf)
        \end{array} \right],
\eeq
and $\thetabfbar$ is some upper bound on the variables.

\section{User Association and Multiflow in Cellular Networks}\label{sec:multiflow}

Having described the general flow and interference control problem, we now turn to
special case of user association and bandwidth optimization in cellular systems.
Our model considers the downlink of a cellular network
with $\NMS$ mobile stations, MS $i$, $i=1,\ldots,\NMS$
and $\NBS$ base stations, BS $j$, $j=1,\ldots,\NBS$.
We suppose that there are a total of $L$ flows, with each flow $\ell$ passing through
exactly one BS-MS pair.
Although a single base station may support multiple flows, we will logically
regard each flow $\ell$ as a separate wireless link with rate $r_\ell$ and
bandwidth allocation $x_\ell$.
For simplicity, we will assume only the BS-MS links are wireless
with interference constraints.  Any other (wired) links are assumed to have
fixed capacity constraints.
We let $\Gamma_{\rm MS}(i)$ denote the set of flows received by MS~$i$ and
$\Gamma_{\rm BS}(j)$ be the set of flows transmitted by BS~$j$.

\begin{figure}
  \centering
  \includegraphics[width=0.4\textwidth]{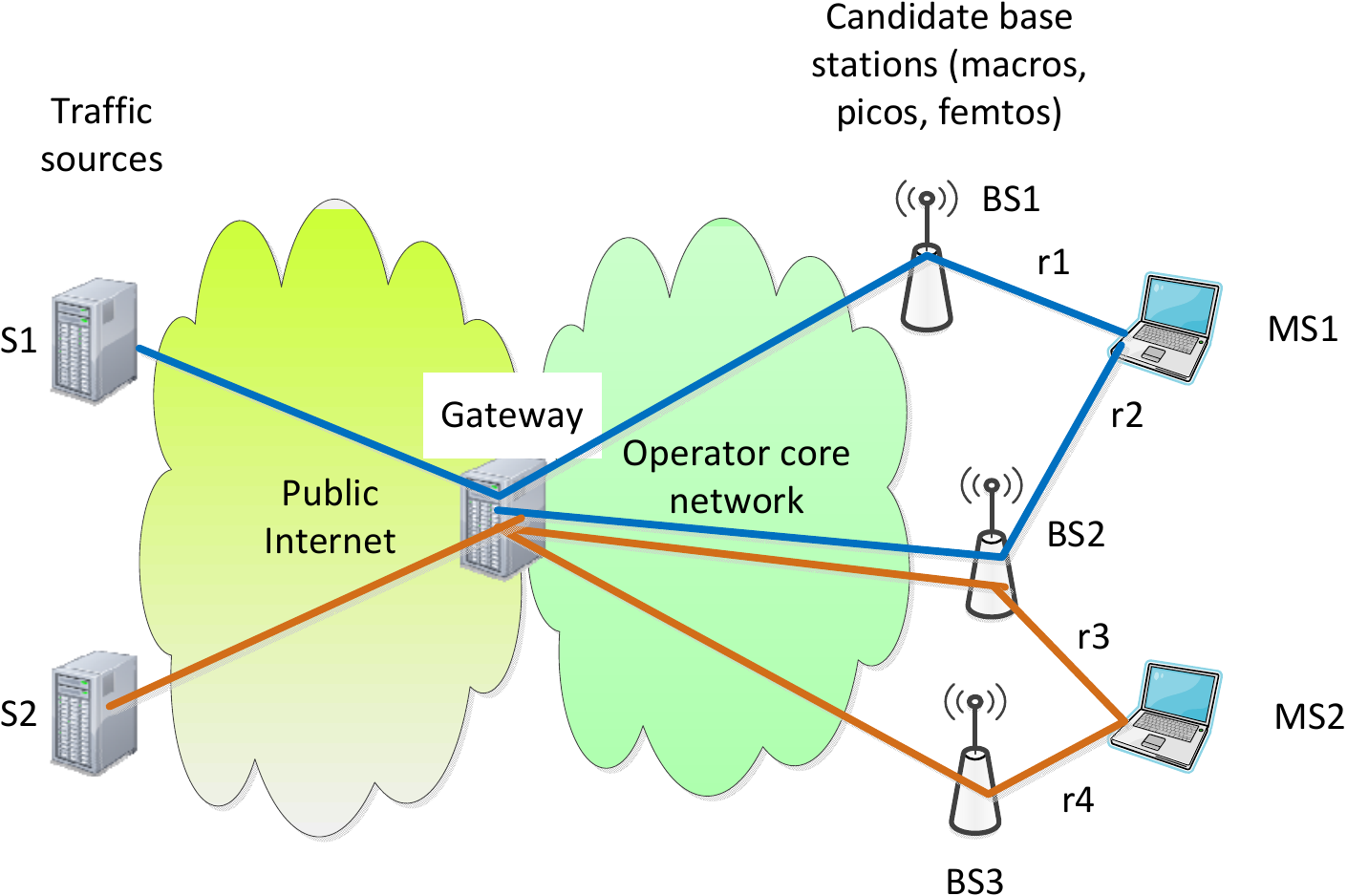}
  \caption{\label{fig:cellSelNetEx} Simple example network where two
  mobiles, MS1 and MS2, must each select between two base stations.
  Since each of the two MS's have two candidate
flows, there are a total of $L = 4$ flows with sets
$\Gamma_{MS}(1) = \{1,2\}$, $\Gamma_{MS}(2) = \{3,4\}$,
$\Gamma_{BS}(1) = \{1\}$, $\Gamma_{BS}(2) = \{2,3\}$ and
$\Gamma_{BS}(3) = \{4\}$. }
\end{figure}

To illustrate the model and notation,
Fig.~\ref{fig:cellSelNetEx} shows a simple example of a network
with two mobile stations, MS1 and MS2, in a typical cellular configuration.
Mobiles MS1 and MS2 received data from sources S1 and S2 in the public Internet.
Following a standard cellular architecture such as 3GPP LTE \cite{OlssonSRFM:09},
the data is assumed to arrive from the public Internet to a common gateway
which then tunnels the traffic through an operator-controlled core network
to the base station selected to the serve the mobiles.  In the simple
example in Fig.~\ref{fig:cellSelNetEx} we have three candidate base stations
shared amongst the two mobiles.
Since each of the two MS's have two candidate
BSs, there are a total of $L = 4$ flows, each flow following a path of the form:
source $\ra$ gateway $\ra$ BS $\ra$ MS.

Now returning to the general case,
let $\rbar_i$ be the total rate to MS$i$, which must satisfy the constraint
\beq \label{eq:rbarMF}
    \rbar_i \leq \sum_{\ell \in \Gamma_{\rm MS}(i)} r_\ell, \quad
    i = 1,\ldots,\NMS.
\eeq
We let $\rbf$ be the vector containing both the total  and individual flow rates
\beq \label{eq:rvecMF}
    \rbf = (\rbar_1,\ldots,\rbar_{\NMS}, r_1,\ldots,r_L)^T.
\eeq
We assume the utility is a separable function of the total rates to each MS
so that
\beq \label{eq:utilMF}
    U(\rbf) = \sum_{i=1}^{\NMS} U_i(\rbar_i),
\eeq
for some utility functions $U_i(\rbar_i)$.

To model the bandwidth constraints and interference,
we assume that each BS $j$ has a total available bandwidth $\wbar_j$ and
radiates a fixed power per unit bandwidth $P_j$.
We let $x_\ell$ be the bandwidth allocated to the flow on link $\ell$.
The bandwidth allocations at BS $j$ must thus satisfy the constraint
\beq \label{eq:bwsum}
    \sum_{\ell \in \Gamma_{\rm BS}(j)} x_\ell \leq \wbar_j, \quad
    j = 1,\ldots,\NBS.
\eeq
Also, the total power radiated by BS $j$ is given by
\[
    P_j\sum_{\ell \in \Gamma_{\rm BS}(j)} x_\ell.
\]
Now suppose that a flow $\ell$ is served by BS $j$ and received by MS $i$.  Then,
the total interference power received on that flow is given by
\beq \label{eq:zpowMF}
    z_\ell = \sum_{k \neq j} H_{\ell k}P_k\sum_{\ell \in \Gamma_{\rm BS}(k)} x_\ell,
\eeq
where $H_{\ell k}$ is the channel gain from BS$k$ to the receiver for flow $\ell$.
Since flow $\ell$ is served by BS$j$, the SINR on the link is then
\[
    \gamma_\ell(z_\ell) = \frac{H_{\ell j}P_j}{z_\ell + N_\ell},
\]
where $N_\ell$ is the thermal noise at the receiver.
We can then assume the capacity constraint on the $\ell$-th link is of the form
\beq \label{eq:capMF}
    r_\ell \leq C(x_\ell,z_\ell) := x_\ell \rho_\ell\bigl(\gamma_\ell(z_\ell)\bigr),
\eeq
where $\rho_\ell$ is the spectral efficiency (rate per unit bandwidth) on the
$\ell$-th link as a function of the SINR.  The spectral efficiency $\rho_\ell$
is multiplied by the allocated bandwidth $x_\ell$ to yield the link capacity
$C_\ell(x_\ell,z_\ell)$ in \eqref{eq:capMF}.  Our formulation will permit any
continuous spectral efficiency function.

Under this model, we consider two problems:
\begin{itemize}
\item \textbf{Multiflow optimization:}  In this case,
there are no additional restrictions on the rate vector $\rbf$ so that
the mobile can be served from multiple base station cells simultaneously.
In 3GPP LTE-Advanced, this scenario would correspond to \emph{carrier aggregation}~
\cite{3GPP36.300,YuanZWY:10}.

\item \textbf{User association optimization:}  In this case,
we assume each mobile can be served by only one cell at a time,
which is the standard model for LTE without the advanced carrier aggregation feature.
This requirement can be modeled via an
additional constraint:  for all $i=1,\ldots,\NMS$,
\beq \label{eq:singPathCon}
    r_\ell = 0 \mbox{ for all but one } \ell \in \Gamma_{\rm MS}(i).
\eeq
Under this constraint, the selection of the index $\ell$ such that $r_\ell \neq 0$,
determines which BS will serve  MS~$i$.
\end{itemize}

Now, the multiflow optimization is exactly
a special case of the general flow and interference control problem in
Section~\ref{sec:genModel}.
Specifically, let $\xbf \in \R^L$ be
the vector of bandwidth allocations and $\zbf \in \R^L$ be the vector of interference
powers.  The bandwidth constraints \eqref{eq:bwsum} can be written of the form
$\Abf_x\xbf \leq \bbf_x$ for some $\Abf_x$ and $\bbf_x$.  The
network constraints will be written in the form $\Abf_r\rbf \leq \bbf_r$,
which can include constraints on either the flow rates or total rates.
The interference powers \eqref{eq:zpowMF} can be written as $\zbf = \Gbf\xbf$
for some gain matrix $\Gbf$ and the capacity \eqref{eq:capMF} can be written
in the form \eqref{eq:Cxz}.

On the other hand, the single path constraint \eqref{eq:singPathCon}
for the user association optimization problem cannot be directly placed into
the form of flow and interference control problem in Section~\ref{sec:genModel}.
As we will discuss in the next section, we will use the multiflow optimization
as an upper bound for this problem
and then truncate the output to obtain a feasible single path
solution.

\section{Approximate Optimization via Dual Decomposition}
\label{sec:dual}

\subsection{General Flow and Interference Control} \label{sec:genOpt}
We now describe potential methods for solving the optimization problems.
We begin with the general flow and interference control
problem in Section~\ref{sec:genModel}.
The maximization of \eqref{eq:opt} in this problem is non-convex
in general since the
capacity function $C(\xbf,\zbf)$ may have an arbitrary form.  In addition,
the utility function $U(\rbf)$ may be non-concave.
However, since both the objective functions and constraints
in \eqref{eq:opt} admit a separable structure, it is natural
to attempt to solve the \eqref{eq:opt} via dual decomposition.

To this end, corresponding to the constrained optimization \eqref{eq:opt},
define the Lagrangian \cite{NorWright:06}
\beqa
    \lefteqn{ L(\thetabf, \mubf) := U(\thetabf) - \mubf^Tg(\thetabf) } \nonumber \\
    &=& U(\rbf) + \mubf^{xT}(\bbf_x-\Abf_x\xbf) + \mubf^{rT}(\bbf_r-\Abf_r\rbf)
        \nonumber \\
    &+&  \mubf^{zT}(\zbf-\Gbf\xbf) + \mubf^{cT}(C(\xbf,\zbf) - \Ebf\rbf),
    \hspace{2cm}
    \label{eq:Lag}
\eeqa
where $\mubf \geq 0$ are the dual parameters which are partitioned conformably
with $g(\thetabf)$ in \eqref{eq:gtheta} as
\beq \label{eq:mu}
    \mubf = (\mubf^x, \mubf^r, \mubf^z, \mubf^c)^T.
\eeq
The following simple lemma provides our main motivation for considering
duality-based optimizations:
namely that the dual maximization is separable:

\begin{lemma} \label{lem:LagSep}
Let $\Phi(\thetabf)$ be any separable function of the form
\beq \label{eq:barrier}
    \Phi(\thetabf) = \Phi(\rbf,\xbf,\zbf) := \sum_i \phi^r_i(r_i) +
    \sum_\ell \phi^x_{\ell}(x_\ell,z_\ell),
\eeq
for some functions $\phi^r_i(\cdot)$ and $\phi^x_\ell(\cdot)$.
Then, the maximization of the Lagrangian $L(\thetabf,\mubf)$ in
\eqref{eq:Lag} \emph{augmented} by $\Phi(\thetabf)$,
\beq \label{eq:thetahat}
    \thetabfhat(\mubf) = (\rbfhat(\mubf), \xbfhat(\mubf), \zbfhat(\mubf))
        := \argmax_{\thetabf \leq \thetabfbar} \Bigl[ L(\thetabf,\mubf)
        -  \Phi(\thetabf) \Bigr]
\eeq
has components given by the solutions to the optimizations
\beqa
    \lefteqn{ \rhat_s \in
        \argmax_{r_s} \Bigl[ U_s(r_s) - \phi^r_s(r_s)
        - (\Ebf^T\mubf^c+\Abf_r^T\mubf^r)_s r_s \Bigr] }  \nonumber \\
        & & \hspace{0.5cm} \mbox{s.t.} \quad 0 \leq r_s \leq \rbar_s
        \hspace{4cm} \label{eq:rhatmu}
\eeqa
and
\beqa
    \lefteqn{ (\xhat_\ell,\zhat_\ell) \in
        \argmax_{x_\ell,z_\ell} \Bigl[ \mu^c_\ell C_\ell(x_\ell,z_\ell) -
             \phi^x_\ell(x_\ell) -  \phi^z_\ell(z_\ell) } \nonumber \\
        & & \hspace{2cm} - (\Abf_x^T\mubf^x + \Gbf^T\mubf^z)_\ell x_\ell
        + \mu^z_\ell z_\ell \Bigr]  \hspace{1cm} \nonumber \\
        & & \hspace{1cm} \mbox{s.t.} \quad 0 \leq x_\ell \leq \xbar_\ell, \quad
        0 \leq z_\ell \leq \zbar_\ell.\label{eq:xzhatmu}
\eeqa
\end{lemma}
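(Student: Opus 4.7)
The plan is to prove Lemma~\ref{lem:LagSep} by direct algebraic expansion of $L(\thetabf,\mubf) - \Phi(\thetabf)$ and regrouping of terms according to the decision variable they involve. The objective is to show that, once the constants are removed, the augmented Lagrangian decomposes as a sum over flows~$s$ (each term depending only on $r_s$) plus a sum over links~$\ell$ (each term depending only on the pair $(x_\ell,z_\ell)$), after which the argmax splits componentwise.

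First, I would rewrite each inner product in \eqref{eq:Lag} as an explicit coordinate sum, using the transposes of the constraint matrices to move $\mubf$ onto the decision variables. Concretely, $\mubf^{xT}\Abf_x\xbf = \sum_\ell (\Abf_x^T\mubf^x)_\ell\, x_\ell$, $\mubf^{rT}\Abf_r\rbf = \sum_s (\Abf_r^T\mubf^r)_s\, r_s$, $\mubf^{zT}\Gbf\xbf = \sum_\ell (\Gbf^T\mubf^z)_\ell\, x_\ell$, and $\mubf^{cT}\Ebf\rbf = \sum_s (\Ebf^T\mubf^c)_s\, r_s$. The assumptions of the problem also give $U(\rbf) = \sum_s U_s(r_s)$, $\mubf^{cT}C(\xbf,\zbf) = \sum_\ell \mu^c_\ell C_\ell(x_\ell,z_\ell)$ since $C(\xbf,\zbf)$ is coordinatewise separable by \eqref{eq:Cxz}, and $\mubf^{zT}\zbf = \sum_\ell \mu^z_\ell z_\ell$. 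The form \eqref{eq:barrier} of $\Phi$ splits into a per-flow piece $\sum_s \phi^r_s(r_s)$ and a per-link piece $\sum_\ell \phi^x_\ell(x_\ell,z_\ell)$.

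Next I would collect like terms. All dependence on $r_s$ lives in $U_s(r_s)$, $\phi^r_s(r_s)$, $(\Abf_r^T\mubf^r)_s r_s$, and $(\Ebf^T\mubf^c)_s r_s$, giving exactly the bracketed objective in \eqref{eq:rhatmu}; all dependence on $(x_\ell,z_\ell)$ lives in $\mu^c_\ell C_\ell(x_\ell,z_\ell)$, $\phi^x_\ell(x_\ell,z_\ell)$, $(\Abf_x^T\mubf^x)_\ell x_\ell$, $(\Gbf^T\mubf^z)_\ell x_\ell$, and $\mu^z_\ell z_\ell$, giving exactly the bracketed objective in \eqref{eq:xzhatmu}. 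The remaining terms $\mubf^{xT}\bbf_x$ and $\mubf^{rT}\bbf_r$ do not involve $\thetabf$ and therefore drop out of the argmax. Since the box constraint $0 \leq \thetabf \leq \thetabfbar$ also separates into per-coordinate intervals, the maximization over $\thetabf$ reduces to an independent maximization in each $r_s$ and in each pair $(x_\ell,z_\ell)$, which is precisely \eqref{eq:rhatmu}--\eqref{eq:xzhatmu}.

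There is no genuine difficulty here; the lemma is essentially a bookkeeping statement about the bilinear structure of the constraints. The only place requiring care is to verify that every coupling term in the Lagrangian is linear in the primal variables (so that the transpose trick applies) and that the nonlinear pieces, namely $U_s$, $C_\ell$, $\phi^r_s$, and $\phi^x_\ell$, each depend only on a single coordinate or a single link. Noting this explicitly is what makes the separability hold even though $U$ and $C$ may be non-concave, which is the point being exploited later in the paper.
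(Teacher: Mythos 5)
Your proposal is correct and is exactly the argument the paper intends: the paper's proof is the one-line remark that the result ``follows immediately from the separable structure of the objective function and constraints,'' and your explicit expansion of the bilinear terms via transposes, the per-coordinate collection, and the observation that the box constraints also separate is precisely the bookkeeping that remark compresses. You have simply written out the details the authors omitted, so there is nothing to add or correct.
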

\begin{proof}
This result follows immediately from the
separable structure of the objective function and constraints.
\end{proof}

Thus, the vector-valued optimization \eqref{eq:thetahat} decomposes into $S$
one-dimensional optimizations over the variables $r_s$ and $L$ two-dimensional
optimizations over the variables $(x_\ell,z_\ell)$.  In many cases, we will
see that these have explicit closed-form solutions or can be easily computed via
a simple line search.

\iftoggle{conference}{
There are two key consequences to this separability result that follow immediately
from standard optimization theory \cite{NorWright:06}:
First, one can use weak duality to efficiently to compute an
upper bound on the maximum net utility.  The dual upper bound may not
be tight (i.e.\ the optimization may not be strongly dual).
However, a second consequence of a computable dual maxima is that
one can efficiently implement several well-known augmented Lagrangian techniques.
These methods include various inexact versions of alternating direction method
of multiplier methods \cite{BoydPCPE:09,Zhang:JSC:11}
-- see the full paper \cite{KimFQR:13-arxiv} for details.
}{
There are two key consequences to this separability result:

\paragraph{Computation of an Upper Bound}  First, the well-known weak duality
property \cite{NorWright:06} states that an upper bound to the maximum of the
utility is given by
\beq \label{eq:dualBnd}
    \max_{\thetabf~:~g(\thetabf)\leq 0} U(\thetabf) \leq
        \min_{\mubf \geq 0} \Lbar(\mubf),
\eeq
where $\Lbar(\mubf)$ is the dual maxima,
\beq \label{eq:Lbar}
    \Lbar(\mubf) := \max_{\thetabf \leq \thetabfbar} L(\thetabf,\mubf).
\eeq
This bound applies even when $U(\thetabf)$ is non-convex.  Moreover,
$\Lbar(\mubf)$ is also always convex in $\mubf$ with subgradient $g(\thetabf(\mubf))$.
Since Lemma~\ref{lem:LagSep} shows that the maxima
$\thetabfhat(\mubf)$ and $\Lbar(\mubf)$
can be easily evaluated for each $\mubf$, the upper bound in
\eqref{eq:dualBnd} can be minimized via convex programming, thereby
providing a computationally tractable upper bound on the utility.
Of course, since $U(\thetabf)$ and/or the constraints may be non-convex,
the bound may not be tight (i.e.\ the problem is not strongly dual).
Also, in absence of the augmenting term, the subgradient may be non-smooth,
meaning the convex optimization may have a slower (although still feasible)
convergence rate~\cite{Shor:85,Nesterov:05}.

\paragraph{Augmented Lagrangian Optimization}
Secondly, since the maximization \eqref{eq:thetahat} can be computed for any
$\mubf$ and separable augmenting function $\Phi$, one can employ a wide range of
augmented Lagrangian optimization methods for finding local optima for
the constrained maxima problem \eqref{eq:opt}.  Augmented Lagrangian methods
\cite{NorWright:06} generate
a sequence of estimates $\thetabf^t$ typically through the iterations of the form
\begin{subequations} \label{eq:ALopt}
\beqa
    \thetabf^{\tp1} &=& \argmax_{\thetabf \leq \thetabfbar}
        \Bigl[ L(\thetabf, \mubf^t) - \Phi(\thetabf, \thetabf^t) \Bigr]
        \label{eq:thetaAL} \\
    \mubf^{\tp1} &=& \mubf^t - \alpha g(\thetabf^{\tp1}), \label{eq:muAL}
\eeqa
\end{subequations}
where $\alpha > 0$ is a step-size and
$\Phi(\thetabf,\thetabf^t) \geq 0$ is a so-called augmenting function that smooths
the optimization.  Separable forms of the augmenting function can include the
quadratic  $\Phi(\thetabf,\thetabf^t) = \epsilon\|\thetabf-\thetabf^t\|^2_2$
for some $\epsilon > 0$.

}
\subsection{Multiflow Optimization} \label{sec:multiFlowOpt}

Since the multiflow cellular optimization in Section~\ref{sec:multiflow}
is a special case of the general flow and interference control
problem, we can directly apply
the results above in Section~\ref{sec:genOpt} to compute upper bounds
on the utility as well as efficiently implement augmented Lagrangian techniques.
\iftoggle{conference}{In addition, as discussed in the full paper
\cite{KimFQR:13-arxiv}, the augmented Lagrangian methods admit a simple
distributed implementation -- a feature of \cite{YeRon:12}
as well as many duality-based power control algorithms (see the monograph
\cite{ChiangHLT:08}).}{

In addition, in the cellular context augmented Lagrangian iterations of the form
\eqref{eq:ALopt} also admit a simple distributed implementation -- a feature
also found in the similar user association optimization algorithm in \cite{YeRon:12},
as well as many duality-based power control algorithms (see the monograph
\cite{ChiangHLT:08}).
Specifically, it can be verified that, using the separable structure of
the optimization in Lemma~\ref{lem:LagSep}, the rates $r_\ell$ and
bandwidth allocations $x_\ell$ and target interference level $z_\ell$ for each
link can be computed at the base station serving that flow. The base station
only needs local information on the gain matrix $\Gbf$ and dual parameters.
The base stations can then update the dual parameters $\mubf$
as in \eqref{eq:muAL} and broadcast the values to the neighboring base stations
potentially through the mobiles.

}

\subsection{User Association Optimization} \label{sec:userAssocOpt}

As discussed above, the single path constraint \eqref{eq:singPathCon} is nonlinear
and cannot be placed into the framework of the general flow and interference
control problem in Section~\ref{sec:genModel}.  However, a standard method
\iftoggle{conference}{\cite{PioroMedhi:04}}{
\cite{FordFull:58,GomHu:61,PioroMedhi:04} } is to solve the multiflow upper bound
and then simply the ``truncate" the solution to remove all but the best path.
In most cases, the multiflow optimization yields solutions that are zero in all but
one path (the book \cite{PioroMedhi:04} provides conditions under which this
property is provably true) and therefore the truncation often
does not significantly alter the optimization result.  A similar truncation method
is used in \cite{YeRon:12}.

\section{Numerical Simulations}
\subsection{Lightly Loading Heterogeneous Network}

To demonstrate the methodology we simulated the algorithm in
a two-tier heterogeneous network with macro- and picocells loosely following the
3GPP evaluation methodology \iftoggle{conference}{in \cite{3GPP36.814}.}{
in \cite{3GPP36.814}, specifically the macro and
outdoor remote radio head/hotzone scenario in Configuration \#1.}
The parameters are shown in Table~\ref{table:SimPara}.
The main difference is that we consider a much greater density of picocells
(10 instead of 4 picos per macrocell), making the number of UEs per cell small.
We will see that under this lightly-loaded scenario, the power optimization
can offer significant benefits.

\begin{table}[!t]\footnotesize
\renewcommand{\arraystretch}{1.3}
\caption{Simulation parameters}
\label{SimParameter}\label{table:SimPara}
\begin{tabularx}{\linewidth}{|p{2.2cm}|X|}
\hline
\bfseries Parameter & \bfseries \centering  Value \tabularnewline
\hline
\hline
Macro layout & 19 hexagonal cell sites with wraparound, 3 cells per site.
Inter-site distance=500 $m$.
    \tabularnewline \hline
Pico and UE distribution & 10 picos and 10 or 25 UEs per macrocell,
uniformly distributed.
    \tabularnewline \hline
TX power & 46 dBm (macro), 30 dBm (pico)
    \tabularnewline \hline
Path loss & macro:  $131.1+42.8\log_{10}(R)$, \newline
            pico:   $145.4+37.5\log_{10}(R)$
    \tabularnewline \hline
Carrier frequency & 2.1 GHz \tabularnewline \hline
Shadowing & 8 dB lognormal std. dev. \tabularnewline \hline
UE noise figure & 7 dB \tabularnewline \hline
System bandwidth & 10 MHz \tabularnewline \hline
SNR to rate & 3 dB from Shannon, max 4.8 bps/Hz \cite{MogEtAl:07}
    \tabularnewline \hline
Antenna Pattern & macro:  3GPP standard model \cite{3GPP36.814} \newline
                  pico:  omnidirectional \tabularnewline \hline
Traffic model & Full buffer \tabularnewline \hline
\end{tabularx}
\end{table}


We focus on the problem of optimized user association (as opposed to multiflow)
where each UE must select exactly
one cell -- either macro or pico --  since this is main concern in
the cellular standards.  To apply the proposed optimization method to the
user association problem,
after the drop, each UE selected three candidate BSs:  the strongest macrocell
and the two strongest picocells.  The algorithm was then run to maximize
a proportional fair utility, $U_i(r) = \log(r)$, which corresponds to
the sum-log rate.  Fig.~\ref{fig:cellSelSim_noBH}
compares the resulting rate distribution
from the optimization algorithm against simpler, but standard,
cell selection policies.
The curve labeled ``optimizer" is the optimization run, but keeping all
the cells transmitting at maximum power.  In this case, the spectral efficiency
is constant and the algorithm reduces to the method of \cite{YeRon:12}.
The curve labeled ``optimizer+PR" adds power reduction where reducing the
bandwidth allocated in a cell causes its power to reduce.
For comparison, the figure also plots the ``macro only" and ``pico only"
where the UEs use only the macro- or picocells. Both of these options
give very poor rates.
The range expansion methods (labeled RE) select the strongest
cell -- pico or macro -- but the picocells are given a fixed bias
in the received signal strength to encourage mobiles to select the smaller cell.
Manual trials with different
bias levels found an an optimal bias of approximately 6 dB.
Fig.~\ref{fig:cellSelSim_noBH}
shows the rate distribution of RE with bias of 0 and
6 dB.

Table~\ref{table:SimResult} shows the numerical values from the simulation.
With the light loading of 10 UEs per macrocell
(the case plotted in Fig.~\ref{fig:cellSelSim_noBH},
we see that the proposed optimization with power reduction
offers an approximately 23\%
increase in cell throughput and 42\% increase in the 5\% cell edge capacity.
While not dramatic, the gains are larger than previous studies
\cite{Ismail:11,YeRon:12} which showed little gain over simple RE with the correct bias.  The reason for the larger gain here is that,
since we are considering a low UE to cell ratio, some picocells can be
chosen to be unused and can reduce their power.
Indeed, if we go to a higher loading of 25 UEs per macrocell
(2.5 UEs per picocell), the gains from power reduction are reduced since most
of the picocell become fully utilized.

\begin{figure}
\centering
\includegraphics[width=3.0in]{../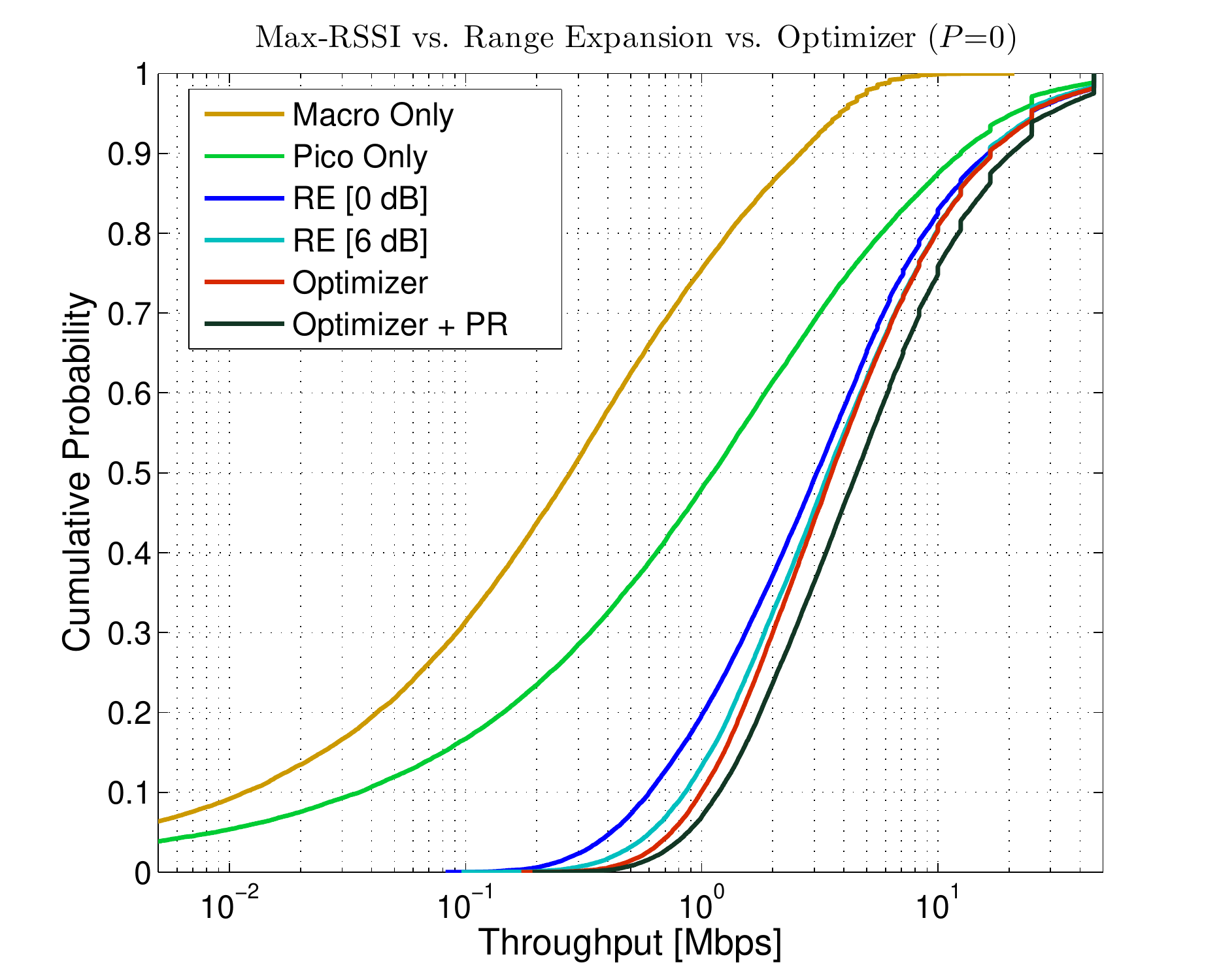}
\caption{Rate distribution comparing various user association algorithms
for 10 UEs per macrocell.  Other parameters given in Table~\ref{table:SimPara}.
}
\label{fig:cellSelSim_noBH}
\end{figure}


\subsection{Backhaul Network Constraints}

In addition to the interference control, a second appealing feature of the
proposed user association methodology is that it can account for backhaul network
constraints.  Such backhaul constraints may arise when open femtocells are used
for coverage.  Femtocells are not connected directly into the operator core network,
but instead use third party backhaul which may be variable in quality.
To simulate this scenario, we employed the same assumptions as before, but
we placed a 1~Mbps limit on 50\% of the picocell links (the picocells now modeling
large numbers of open femtocells).
Fig.~\ref{fig:cellSelSim_BH} shows that the gains from
optimized user association in this scenario are even larger.
Indeed, for 10 UEs per cell, there is an 28\% increase in
average user throughput and 112\% increase in the 5\% cell-edge rate.
In addition, Fig.~\ref{fig:cellSelSim_BH}
shows that RE gives a minimal gains in this heterogeneous network condition
since RE is a power-based selection scheme and does not account for
the backhaul constraints.

\begin{table}\footnotesize
\caption{Simulation Results}
\label{table:SimResult}
\centering
\begin{tabular}{|c|c|c|l|l|}
\hline

\multicolumn{2}{|c|}{\textbf{Method}} & \textbf{RE [6dB]} & \multicolumn{1}{c|}{\textbf{Opt.}} & \multicolumn{1}{c|}{\textbf{Opt. + PR}} \\ \hline

\multirow{2}{*}{\textbf{10UE}} & Sp. Eff.\footnotemark[1] & 0.608 & 0.625 (2.8\%)\footnotemark[3] & 0.748 (23\%)\footnotemark[4] \\ \cline{2-5}
& 5\% Rate\footnotemark[2] & 0.608 & 0.745 (22.6\%) & 0.866 (42.4\%) \\ \hline

\multirow{2}{*}{\textbf{25UE}} & Sp. Eff. & 0.909 & 0.940 (3.4\%) & 0.989 (8.7\%) \\ \cline{2-5}
& 5\% Rate & 0.351 & 0.418 (19.1\%) & 0.435 (23.9\%) \\ \hline

\end{tabular}
\newline
\parbox{0.8\columnwidth}{\footnotesize{1 : Spectral efficiency in bit/sec/Hz \\ 2 : 5\% Cell-edge rate in Mbps \\ 3,4 : gains compared to range expansion [6 dB]}}
\end{table}

\begin{figure}
\centering
\includegraphics[width=3.0in]{../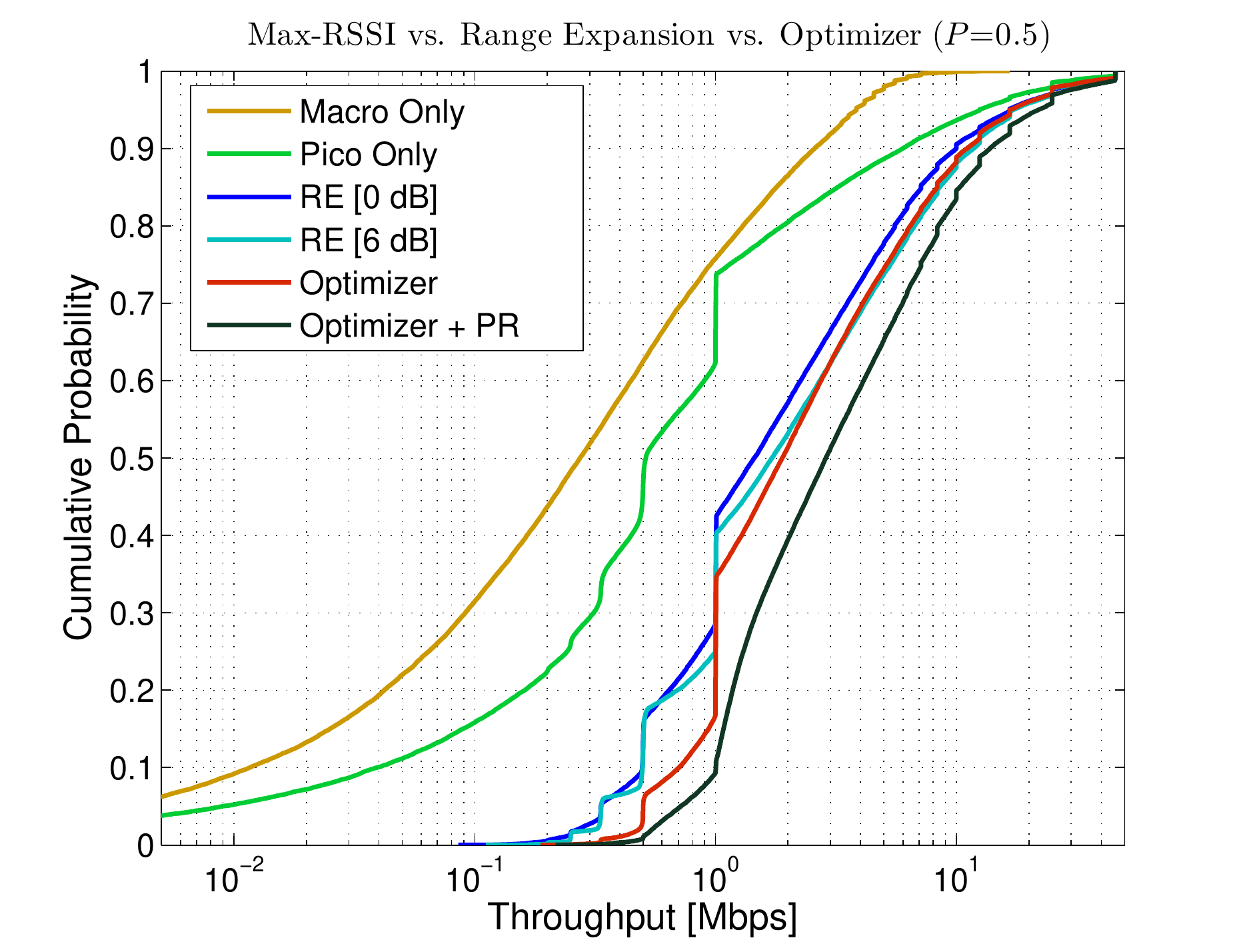}
\caption{Rate distribution of various user association algorithms
with backhaul rate limits on 50\% of the picocells.}
\label{fig:cellSelSim_BH}
\end{figure}

\section{Conclusion}
We have presented a general methodology for joint interference coordination
and user association problems in cellular networks.
The methodology is extremely general in that it can incorporate interference
constraints described by an linear mixing model followed by an arbitrary
interference-to-rate mapping.  Although we have only considered bandwidth allocations
in this paper, based on the results in \cite{RanganM:12},
we believe that we can extend these results to more complex ICIC schemes as well
such as subband allocations and beamforming.  Also, while we cannot find
a provably optimal solution, the dual decomposition
method enables efficient implementation
of suboptimal solutions via augmented Lagrangian techniques as well as
computation of upper bounds on the maximum utility.

\bibliographystyle{IEEEtran}
\bibliography{bibl}
\end{document}